\def\eqref#1{equation~\ref{#1}}
\def\1{\bm{1}}
\DeclareMathAlphabet{\mathsfit}{\encodingdefault}{\sfdefault}{m}{sl}
\SetMathAlphabet{\mathsfit}{bold}{\encodingdefault}{\sfdefault}{bx}{n}
\newtheorem{theorem}{Theorem}
\newtheorem{assumption}{Assumption}
\newtheorem{Remark}{Remark}
\title{Secure Communication Model for Quantum Federated Learning: A Post Quantum Cryptography (PQC) Framework}
\author{Dev Gurung, Shiva Raj Pokhrel and Gang Li 
\\
School of Information Technology\\
Deakin University, VIC, AUS\\
\texttt{\{d.gurung, shiva.pokhrel, gang.li\}@deakin.edu.au} \\
}
\begin{document}
\maketitle
\begin{abstract}
We design a model of Post Quantum Cryptography (PQC) Quantum Federated Learning (QFL). We develop a framework with a dynamic server selection and study convergence and security conditions. The implementation and results are publicly available \footnote{https://github.com/s222416822/PQC-QFL-Model}. 
\end{abstract}
\section{Introduction}

Quantum Federated Learning (QFL) is an emerging area with several studies in recent years; see \citep{larasatiQuantumFederatedLearning2022} and references therein.
Most QFL works focus on optimization 
\citep{kaewpuangAdaptiveResourceAllocation2022a,yangDecentralizingFeatureExtraction2021a,yunSlimmableQuantumFederated,xiaQuantumFedFederatedLearning2021a}, security \citep{xiaDefendingByzantineAttacks2021, zhangFederatedLearningQuantum2022} and implementation \citep{qiFederatedQuantumNatural2022, yamanyOQFLOptimizedQuantumBased2021,abbasPowerQuantumNeural2021, huangQuantumFederatedLearning2022, chehimiQuantumFederatedLearning2022}.  
In a different context, secure blockchain-based FL is proposed in~\citep{xu2023post}. 
However, post-quantum secure QFL is poorly studied in the literature. 

 \begin{algorithm}[H]
            \caption{Post Quantum Secure Communication in QFL}
            \label{alg:basic}
            \begin{algorithmic}[1]
            \State Inialize: total $n$ devices in $device\_list:\; \{d_i\} = \{d_1, d_2, ... , d_n\}$.
            \State Generate keys $(pk, sk)$ for each device.
            \State Randomly select a  server $d_s$ from $\{d_i\}$
            \Procedure{deviceTask}{$pk_i, sk_i$}
            \For{$device_i \in \{d_i\}$}
            \State Train local params $w_i$.
            \State Sign params $w_i$ with PQCScheme: $\sigma_i \leftarrow Sign(w_i)$
            \State Send \{$\sigma_i, w_i, pk_i$\} to server.
            \EndFor
            \EndProcedure
            \Procedure{serverTask}{$pk, sk$}
            \State Initialization: $param\_list = []$
            \State $d_s$ receives \{$\sigma_i, w_i, pk_i$\} from all $\{d_i\}$.
            \For{device $d_i$ in $\{d_i\}$}
            \State $d_s$ verifies each $\sigma_i$.
            \If{Verification is true}
            \State $param\_list.append(w_i)$
            \Else
            \State $w_i$ is excluded
            \EndIf
            \EndFor
            \State Perform fedAvg
            \State Send back global params $w_g$ to devices.
            \EndProcedure
            \end{algorithmic}
        \end{algorithm}

We develop a preliminary study with a proof of concept model of post-quantum secure QFL. A preliminary version of this work is accepted for presentation in \citep{gurung2023}. The two \textit{main contributions} are as follows: i) we develop a novel PQC-QFL model, ii) we implement the proof of concept and evaluate the proposed model theoretically and experimentally. The uploading of learned model parameters, $w_i$, from $i$ clients to the server can be poisoned by an adversary. 
Post-quantum secure schemes are essential to protect against unauthorized access during uploads/downloads.
We consider a device $i$ with signature scheme $S_i(sk_i,pk_i)$ 
which is used to sign
the parameters $w_i$ as, $\sigma_i \leftarrow S_i^{sign} (w_i, sk_i)$, and
are verified as $\{w_i, pk_i, \sigma_i$ \} by the server.

        
\section{Proposed PQC-QFL Model}
 We present details of the proposed PQC-QFL model in Algorithm \ref{alg:basic}.
For $n$ devices, a server device $d_s$ is selected dynamically (e.g., random server selection).
First, all devices train the models locally.
Once the local models are learned, each device $d_i$ signs with its private key $sk_i$ to generate the signature $\sigma_i$.
Each device sends this information to the server $d_s$ and its public key $pk_i$.
Upon receiving this information, the server verifies and filters the compromised models.

Eventually, it performs the model aggregation to generate a global model and sends it back to the devices. The PQC-QFL design is grounded in the convergence and security conditions as follows: 

i) each device trains local Stochastic Gradient Descent (SGD) with $T$ iterations, and PQC scheme with security level $\lambda$, the Algorithm \ref{alg:basic} converges at a rate of $ n * \{ O(1/\sqrt{T}) + O(1/\lambda^2) \} $; 

ii)  with Algorithm \ref{alg:basic}, the security level is SUF-CMA (Strong Existential Unforgeability under Chosen Message Attack) secure (provided by the Dilithium \citep{ducasCRYSTALSDilithiumLatticeBasedDigital2018} signature scheme). 
More importantly, with the proposed model, we have demonstrated that the resilience  of the proposed dynamic server is always higher.


            

\subsection{Implementation and Evaluation}

\begin{figure}[h]
        \centering
         \subfigure[Training Accuracy]{
        \includegraphics[width=0.4\textwidth]{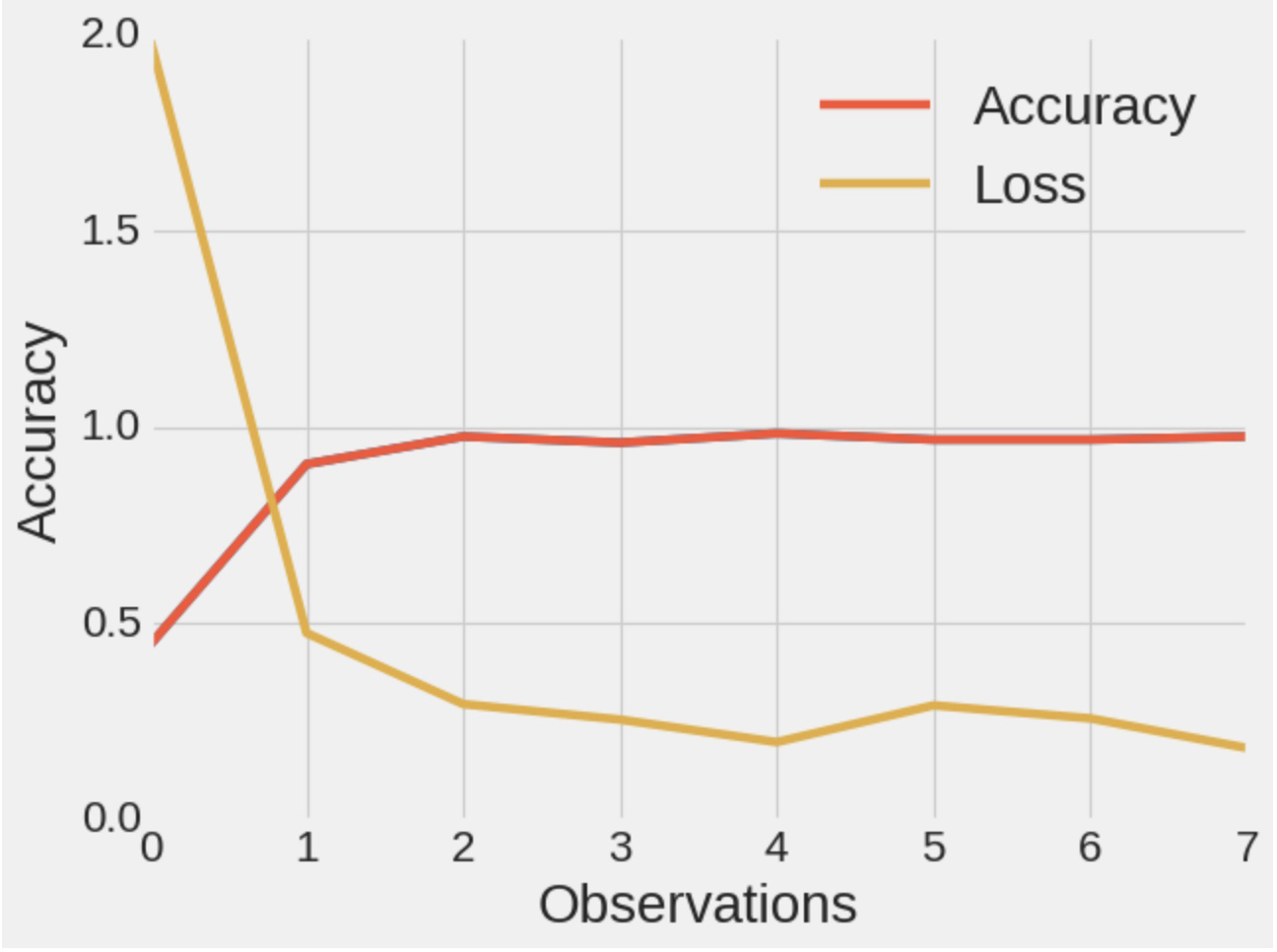}
        \label{fig:train_accuracy}
        }
        \subfigure[Test Accuracy]{
        \includegraphics[width=0.4\textwidth]{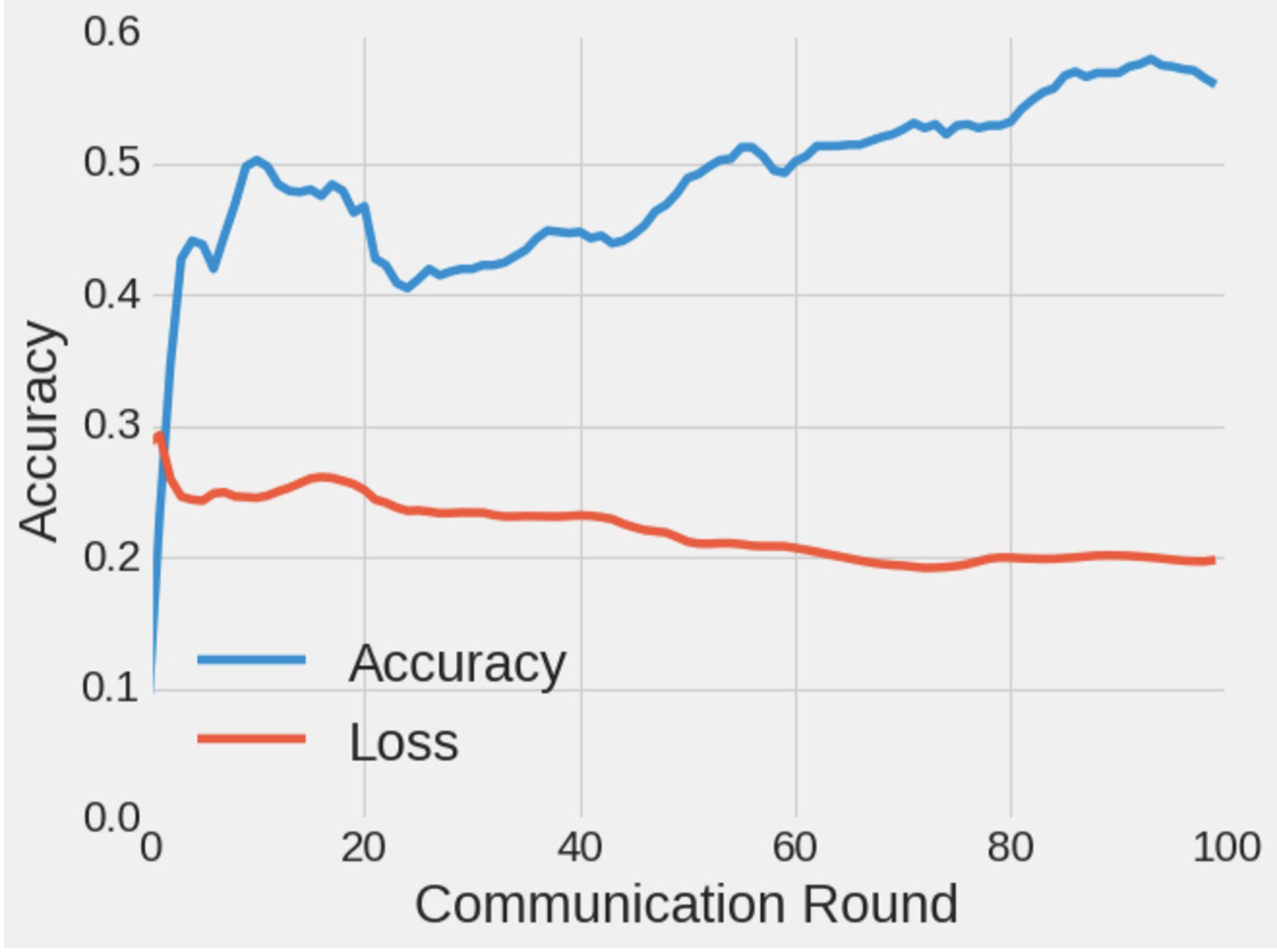}
        \label{fig:test_accuracy}
        }                    
         \subfigure[PQC Schemes]{
                \includegraphics[width=0.42\textwidth]{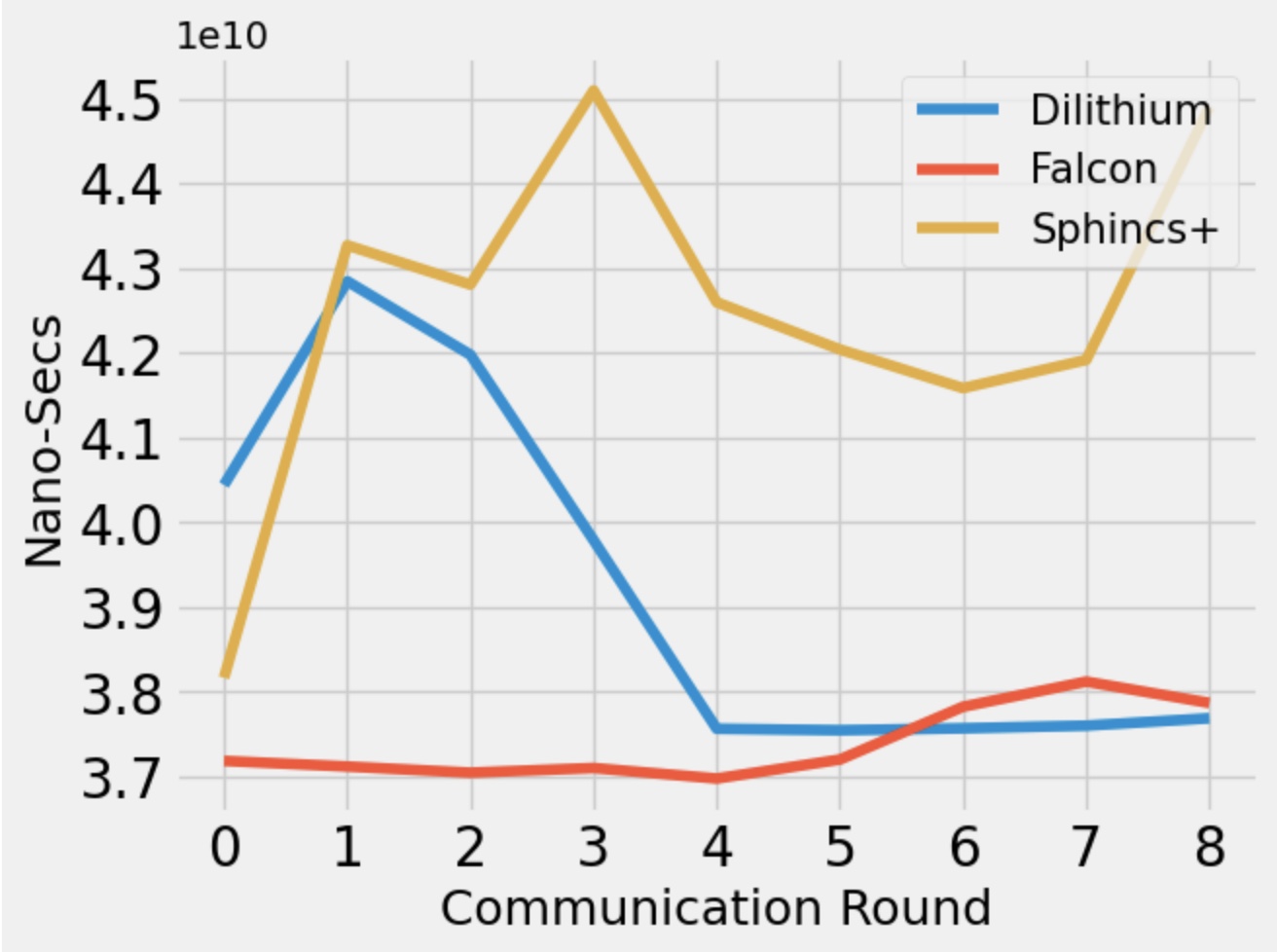}
                \label{fig:comparison}
            }
            \subfigure[Comm-Time]{
                \includegraphics[width=0.4\textwidth]{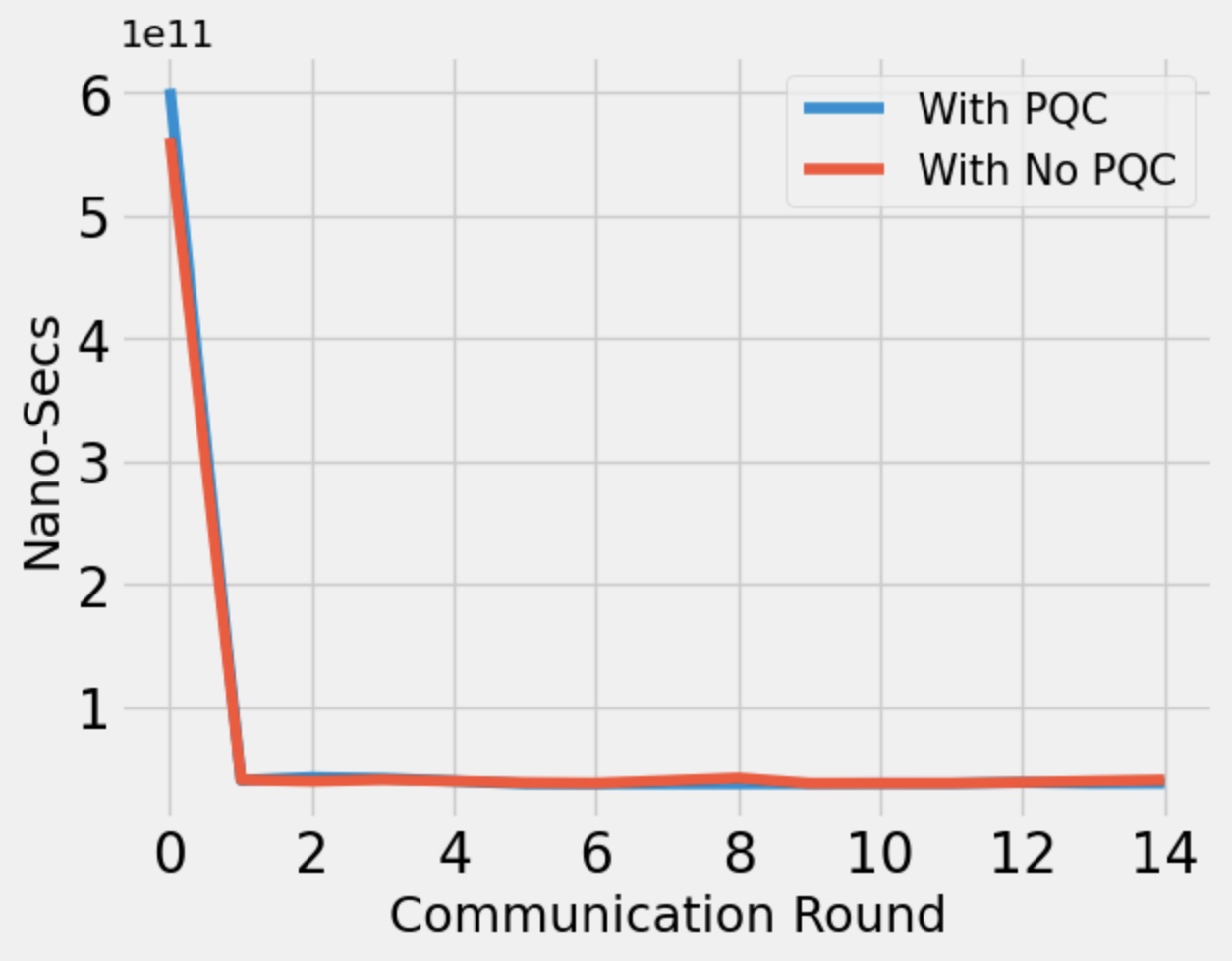}
                \label{fig:comm_time}
            }      
     \caption{Accuracy, Performance, and Communication time of PQC-QFL.}
 \label{fig:all}   
\end{figure}

We developed our implementation and experimental framework by extending the framework presented
by \cite{zhaoExactDecompositionQuantum2022a} et al. 
The post-quantum cryptography, liboqs \footnote{https://github.com/open-quantum-safe/liboqs.git}  and liboqs-python \footnote{https://github.com/open-quantum-safe/liboqs-python.git} libraries are integrated with the codebase.
For most experiments, we employ the Dilithium signature scheme.
MNIST dataset is shared between clients by following a cycle-m structure where each client will only have a certain number of label classes.

As shown in Figure \ref{fig:train_accuracy}, 
performance in training is far better which is visible after $2$ epochs.
The observations are recorded for every $4$ times per epoch.
This implies that with PQC-QFL, quantum machine learning performs very well in terms of training.
 Figure \ref{fig:test_accuracy}, 
shows the performance on the test set done by the server with the new global model. 
After the $100$ communication round, the test accuracy is around $55\%$, which is poor in comparison to classical machine learning.
This demonstrates that QFL suffers heavily over non-IID datasets.

Figure \ref{fig:comm_time} shows the difference in the impact of using PQC schemes in signing and verification. 
Some added tasks of the signature scheme would add some delay to the system's overall performance.
However, as shown in Figure \ref{fig:comm_time}, we cannot see much noticeable adverse impact by employing the PQC schemes in terms of nanoseconds.
Figure \ref{fig:comparison} shows the comparison of this effect on Dilithium with other schemes such as Falcon \citep{fouque2018falcon} and SPINCHS+ \citep{bernstein2019sphincs+}. 
Observe that Dilithium and Falcon perform in a similar fashion.
Whereas SPHINCS+ exhibits increased delay compared to Dilithium and Falcon, which requires further investigations and in-depth understanding and is left for future work.

\section{Theoretical Analysis} \label{analysis:theoretical}

With the proposed PQC-QFL, we have the following two theorems.
\begin{theorem} [Resilience] \label{theorem:1}
    The occurrence of single-point failure with  Algorithm \ref{alg:basic} (e.g., \textit{random-server architecture}) is always less than in the current QFL.
\end{theorem}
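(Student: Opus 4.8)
The plan is to turn the informal notion of a ``single-point failure'' into a precise probabilistic statement and then compare the two architectures by a short counting argument. First I would fix the model: consider a single communication round with $n$ devices, and let $F \subseteq \{d_1,\dots,d_n\}$ be the (possibly adversarially chosen) set of devices that are faulty in that round --- crashed, unreachable, or compromised --- with $f := |F|$. I would declare a \emph{single-point failure} to be the event that the device currently playing the role of aggregation server lies in $F$, since then the $\mathrm{fedAvg}$ step of Algorithm~\ref{alg:basic} yields a corrupted or missing global model and the round is lost regardless of how many honest devices participated. In the current QFL the server is a fixed, publicly known node $d^\ast$ fixed before deployment; in Algorithm~\ref{alg:basic} the server $d_s$ is drawn uniformly at random from $\{d_1,\dots,d_n\}$ at the start of the round, independently of $F$.

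Next I would compute the two failure probabilities. For the fixed-server scheme, $\Pr[\text{SPF}_{\mathrm{fix}}] = \Pr[d^\ast \in F] =: p^\ast$; since $d^\ast$ is static and known, an adversary can concentrate its entire budget on that single node, so in the worst case $p^\ast = 1$ (and in general $p^\ast$ is the adversary's success rate against a chosen target). For the random-server scheme, conditioning on $F$ and using independence of the draw, $\Pr[\text{SPF}_{\mathrm{rand}} \mid F] = \Pr[d_s \in F] = f/n$, hence $\Pr[\text{SPF}_{\mathrm{rand}}] = \E[f]/n$. The crux is that unpredictability of $d_s$ forces an adversary who can disable $f$ nodes to ``guess'' which one is the server, succeeding only with probability $f/n$ rather than with certainty.

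Then I would conclude. Working in the non-degenerate regime where at least one device is functioning ($f < n$, without which no federated scheme works) and noting that attacking the known static server is always an option available to the adversary --- so $p^\ast \ge \E[f]/n$ --- we obtain
\[
\Pr[\text{SPF}_{\mathrm{rand}}] \;=\; \frac{\E[f]}{n} \;\le\; p^\ast \;=\; \Pr[\text{SPF}_{\mathrm{fix}}],
\]
with strict inequality whenever $n \ge 2$ and the failure/attack process is non-trivial. I would also observe that, because the server role is not bound to any fixed node, upon detecting a faulty $d_s$ the protocol can simply re-draw from the remaining $n-1$ devices, which only widens the gap. Hence the dynamic-server architecture is always at least as resilient to single-point failure, and strictly more resilient in every non-trivial case, which is exactly the claim.

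The main obstacle is not computation but modelling: one must commit to a precise and fair definition of ``single-point failure'' and to a failure/attack model under which the comparison is meaningful --- in particular excluding the trivial case where essentially all devices are down (both schemes fail there) and justifying that a publicly fixed server is at least as easy to disrupt as a secretly randomized one. Once those choices are pinned down, the inequality itself is a one-line argument.
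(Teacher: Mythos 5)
Your proposal is correct and follows essentially the same route as the paper's own proof: both arguments reduce resilience to comparing the probability that an adversary hits the aggregation node, which is (near) $1$ for a fixed, publicly known server and strictly less than $1$ (your $f/n$) when the server is drawn at random from the $n$ devices. Your version merely formalizes the paper's informal statement $P(A \rightarrow fixedServer) > P(A \rightarrow randomServer)$ by introducing the faulty set $F$ and the explicit bound $\E[f]/n \le p^\ast$, which is a tidier but not a different argument.
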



\begin{proof}
    With Algorithm \ref{alg:basic}, the server role is selected randomly or based on the reputation as required. 
    Thus, the server keeps changing with time.
   For an adversary $A$, to attack fixed-server $fixedServer$, as identity is the same every time, the probability of adversary accuracy attacking the server is $~1$.
   However, with a random or server selection approach, the probability of an adversary $A$ finding the right device $randomServer$ to attack is almost always $<1$.
   Thus, we can write, 
   $$P(A \rightarrow fixedServer) > P(A \rightarrow randomServer). $$
    
\end{proof}

\begin{theorem}[Convergence] \label{theorem:2}
Given $n$ devices, each device trains local SGD with $T$ iterations and employs PQC at security level $\lambda$. Then Algorithm \ref{alg:basic}
converges with a rate of $ n * \{ O(1/\sqrt{T}) + O(1/\lambda^2) \} $.
\end{theorem}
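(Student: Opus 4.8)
The plan is to split the convergence error of Algorithm~\ref{alg:basic} into an \emph{optimization} component coming from the local SGD runs and a \emph{cryptographic} component coming from the PQC signing/verification layer, to bound each one separately, and then to aggregate over the $n$ devices.

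First I would fix the standard federated optimization setup: let $F(w)=\frac{1}{n}\sum_{i=1}^{n} F_i(w)$ with $F_i$ the local loss at device $i$, and assume the usual regularity conditions ($L$-smoothness of each $F_i$, unbiased stochastic gradients with variance bounded by $\sigma^2$, and bounded gradients). Under these assumptions, $T$ steps of local SGD with a suitably decaying step size give, for each device $i$, a bound of the form $\E[\|\nabla F_i(w_i^{T})\|^2]=O(1/\sqrt{T})$ (equivalently an $O(1/\sqrt{T})$ bound on the expected optimality gap in the convex case). This step is essentially classical; the quantum nature of the local learner enters only through the constants, which are absorbed into $O(\cdot)$.

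Next I would model the PQC layer. In Algorithm~\ref{alg:basic} the server aggregates only those updates whose Dilithium signatures verify; by the SUF-CMA property invoked in condition (ii) above, a corrupted update passes verification only with probability negligible in $\lambda$, and --- more relevant here --- the lattice-based encoding and rounding underlying a security-$\lambda$ instantiation perturb the transmitted parameters $w_i$ by an amount governed by the scheme's modulus and decoding margin, which I would argue scales as $O(1/\lambda^2)$. I would isolate this as a lemma: the update actually incorporated at the server differs from the intended update by an additive error $e_i$ with $\E[\|e_i\|^2]=O(1/\lambda^2)$. \textbf{The main obstacle lies precisely here:} turning the security parameter $\lambda$ into a quantitative $O(1/\lambda^2)$ bound on parameter-level distortion forces one to commit to a concrete error model for the signature/encoding scheme, and the polynomial (rather than exponential) dependence on $\lambda$ must come from the parameter-selection rules of the scheme, not from its security reduction.

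Finally I would combine the two pieces. Writing the global model as the FedAvg of the $n$ verified updates, the error of the aggregate is bounded, via the triangle inequality together with Jensen's inequality for $\|\cdot\|^2$, by the sum of the per-device optimization errors and the per-device cryptographic errors. Substituting the two bounds yields $\sum_{i=1}^{n}\bigl(O(1/\sqrt{T})+O(1/\lambda^2)\bigr)=n\,\{\,O(1/\sqrt{T})+O(1/\lambda^2)\,\}$, the claimed rate. A closing sanity check confirms that the verification/filtering step only removes contributions and hence cannot increase this upper bound.
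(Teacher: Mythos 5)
Your decomposition is genuinely different from the paper's, and the difference is where the trouble lies. The paper never models the PQC layer as a perturbation of the learned parameters at all: its argument is purely an accounting of overhead. It invokes the standard result that SGD reaches a stationary point at rate $O(1/\sqrt{T})$ (its Remark 1 together with the smoothness/convexity/bounded-variance assumptions), asserts in a second remark that Dilithium's key-generation, signing and verification amount to a complexity of $O(1/\lambda^2)$, and then simply adds the two quantities and multiplies by $n$ to obtain $n\{O(1/\sqrt{T}) + O(1/\lambda^2)\}$. In other words, the $\lambda$-term in Theorem \ref{theorem:2} is meant as a time/delay contribution of the cryptographic operations per device, not as a statistical error entering the optimization dynamics.

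Your route --- treating the security level as inducing an additive distortion $e_i$ on each transmitted update with $\E[\|e_i\|^2] = O(1/\lambda^2)$ and propagating it through FedAvg --- founders exactly at the step you flagged as the main obstacle, and the obstacle is not surmountable for this scheme. Dilithium is a pure signature scheme: in Algorithm \ref{alg:basic} the weights $w_i$ are transmitted in the clear alongside $\sigma_i$, and the server either accepts $w_i$ exactly as sent (if verification passes) or excludes it entirely. The internal rounding and high-bits decomposition in Dilithium act on lattice commitments, never on the signed message, so there is no parameter-level perturbation to quantify, and no parameter-selection rule of the scheme will deliver the $O(1/\lambda^2)$ distortion lemma your aggregation step requires. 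Without that lemma the cryptographic term cannot appear in your bound at all; the only effect of the PQC layer on the iterates is the removal of non-verifying updates, which changes the effective averaging weights rather than adding an error of order $1/\lambda^2$. To recover the stated rate you would have to reinterpret the $\lambda$-term the way the paper does --- as per-round computation/communication overhead --- which is a claim about delay rather than about optimization error, and which the paper supports only at the level of its two remarks rather than by any perturbation analysis. (Whether that accounting is itself rigorous is a separate issue --- the paper states the PQC time is proportional to $\lambda$ yet assigns it complexity $O(1/\lambda^2)$ --- but the relevant point for your proposal is that its key lemma has no basis in the signature scheme actually used.)
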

To demonstrate the convergence of the proposed PQC-QFL, Algorithm \ref{alg:basic}, 
we follow standard convergence analysis for SGD with the following assumptions.

\begin{assumption}[Smoothness]
Loss function for each worker $L_i(x)$ is L-smooth i.e. $\forall(x, y) \in \mathbb{R}^d$,
$$\mid L_i(x) - L_i(y) - \langle \Delta L_i(y), x - y \rangle 
 \mid \leq \frac{L}{2} \mid \mid x - y \mid \mid ^2. $$

\end{assumption}
where, $\Delta L_i(y)$ is the gradient of $L$ at point $y$ and 
$\mid \mid x-y \mid \mid$ is the Euclidean distance between $x$ and $y$.
Also, $L$ is the Lipschitz constant which is a mathematical concept to measure how fast a function can change.
L-smooth loss function means its slope doesn't change too rapidly.

 \begin{assumption}[Convex]
The loss function for each worker $L_i(x)$ is $Convex$.
\end{assumption}
In optimization, it is essential to find the global minimum and guarantee the convergence and efficiency of the algorithms.

 \begin{assumption}[Unbiasedness and Bounded Variance]
Gradients hold properties of unbiasedness and bounded variance.
\end{assumption}

To analyze the convergence properties of the QFL algorithm, let us consider the following optimization problem:
\begin{equation}
\min_{\boldsymbol{w}} \frac{1}{n} \sum_{i=1}^{n} L_i(\boldsymbol{w}),
\end{equation}
where $\boldsymbol{w}$ is the model parameter, $L_i(\boldsymbol{w})$ is the loss function for the $i$-th device, and $n$ is the total number of devices. 

The stochastic gradient descent algorithm for QFL involves the following update rule:
\begin{equation}
\boldsymbol{w}_{t+1} = \boldsymbol{w}t - \eta_t \sum{i=1}^n \nabla L_i(\boldsymbol{w}_t),
\end{equation}

where $\eta_t$ is the learning rate at iteration $t$.

\begin{Remark}
Standard SGD converges to a stationary point of the objective function with a rate of $O(1/\sqrt{T})$, where $T$ is the total number of iterations. 
\end{Remark}

Under the assumptions, we can show that SGD converges to a stationary point of the objective function with a rate of $O(1/\sqrt{T})$, where $T$ is the total number of iterations.

We show that the iterates $\boldsymbol{w}_t$ converge to a limit point, which is a stationary point of the objective function. We can use the following inequality to bound the expected distance between the iterates and the limit point:

\begin{equation}
\mathbb{E}[|\boldsymbol{w}_t - \boldsymbol{w}^*|^2] \leq \frac{2B^2}{\eta_t t}
\end{equation}

where $\boldsymbol{w}^*$ is the limit point, $B$ is an upper bound on the norm of the gradients, and $\eta_t$ is the learning rate at iteration $t$.

Using this inequality, we can show that the expected objective function value converges to the optimal value at a rate of $O(1/\sqrt{T})$. Specifically, we have:

\begin{equation}
\mathbb{E}[f(\boldsymbol{w}_t)] - f(\boldsymbol{w}^*) \leq \frac{2B^2 L}{\sqrt{T}},
\end{equation}

where $f(\boldsymbol{w})$ is the objective function, and $L$ is the Lipschitz constant of the gradients.

Therefore, we have shown that SGD converges to a stationary point of the objective function with a rate of $O(1/\sqrt{T})$, under suitable assumptions on the loss functions, gradients, and learning rate.

\begin{Remark}
    PQC scheme Dilithium has a time complexity of $O(1/\lambda^2)$ where
    $\lambda$ is the security level of the scheme.
\end{Remark}

    With security level $\lambda$,
    the key generation happens only once which takes time as 
    $keyTime()$.
    For signing, the time it takes is $signTime()$ whereas, for verification,
    the time it takes will be $verificationTime()$.
    The performance of Dilithium depends on many factors like hardware implementation, message size, etc.
    Here,  $$ \{ keyTime() + signTime() + verificationTime()\} \propto \lambda $$


\begin{proof}
    From remark 1, we get the SGD convergence rate as:
    $O(1/\sqrt{T})$.
    Whereas with the added time complexity of the PQC scheme for signing, verifying, and key generating, the proposed scheme converges at the rate of $O(1/\sqrt{T}) + O(1/\lambda^2)$.
    Where PQC time complexity is $O(1/\lambda^2)$.
    Thus, the total time delay for convergence rate would be at least $\leq n * \{ O(1/\sqrt{T}) + O(1/\lambda^2) \}$ for $n$ devices.
\end{proof}

\subsection{Security Conditions}
For each communication between the trainer device and server device by the following Algorithm \ref{alg:basic}, the security level is SUF-CMA secure which is provided by the Dilithium signature scheme.

For any digital signature scheme, the standard notion is UF-CMA security which refers to security under chosen message attacks.
The security model involves an adversary with an accessible public key that can be used to access the signing oracle to sign other messages.
The adversary tries to get a valid signature for any new message.
Dilithium signature scheme is SUF-CMA secure i.e. Strong Unforgeability under Chosen Message Attacks.
It is based on the hardness of standard lattice problems.

With Dilithium implementation, the communication between server and client becomes SUF-CMA secure i.e. for quantum random oracle H, adversary $\mathcal{A}$ has the advantage of breaking the communication which can be represented as \citep{ducasCRYSTALSDilithiumLatticeBasedDigital2018}:

\begin{equation} \label{eqn:security}
    Adv_{Dilithium}^{SUF-CMA}(\mathcal{A}) \leq 
    Adv_{k,l,D}^{MLWE}(\mathcal{B}) + 
    Adv_{H,k,l+1,\zeta}^{SelfTargetMSIS}(\mathcal{C}) + 
    Adv_{k,l,\zeta'}^{MSIS}(\mathcal{D}) +  
    2^{-254}
\end{equation}
for uniform distribution $D$ over $\mathcal{S}_n$.
Also, 
$$\zeta = max\{\gamma_1 - \beta, 2 \gamma_2 + 1 + 2^{d-1} * 60\} \leq 4 \gamma_2 $$
$$\zeta' = max\{2(\gamma_1 - \beta), 4 \gamma_2 + 2\} \leq 4 \gamma_2 + 2$$

In Eqn. \ref{eqn:security}, the assumptions used are:
\begin{enumerate}
    \item $Adv_{m,k,D}^{MLWE}(\mathcal{B})$ is an advantage of algorithm $B$ in solving MLWE (Module variant of Learning with Error) problem for integers $m, k$ with probability distribution $D:R_q \rightarrow [0,1]$. This assumption is required to protect against key recovery.
    \item $Adv_{H,k,l+1,\zeta}^{SelfTargetMSIS}(\mathcal{C})$ refers to advantage of algorithm $C$ in solving SelfTargetMSIS problem.It is based on combined hardness of MSIS and  the hash function $H$. SelfTargetMSIS assumption provides the basis for new message forgery.
    \item MSIS assumption is needed for strong unforgeability.
\end{enumerate}

The notations used are $m,k$ are integers, $D$ is a probability distribution, $\mathcal{A, B, C, D}$ are algorithms, $H$ is a cryptographic hash function, 
$2^{-254}$ is a small constant representing negligible probability of occurrence of some rare event.

Dilithium is simple to implement securely, stateless, and acceptable combined size of public key and signature \citep{ducasCRYSTALSDilithiumLatticeBasedDigital2018}.

\section{Conclusion}
We have developed a novel preliminary model and implementation of PQC-QFL.
We performed extensive analysis and presented results that assert the practicality and suitability of PQC schemes in the QFL.
The proof of concept experiment also involved the removal of a fixed central server approach where any participating device can be selected to perform as a central server in contrast to a dedicated server  in the traditional QFL.

\end{document}